\documentclass[11pt]{article}
\usepackage[left=1in, right=1in, top=1in, bottom=1in, margin=1in]{geometry}

\usepackage{tcolorbox}
\usepackage{ninecolors}
\usepackage{xcolor}

\usepackage{amsmath,amssymb,xspace,graphicx,relsize,bm,breqn,multirow}
\usepackage{multicol}

\usepackage{braket}
\usepackage{qcircuit}
\usepackage{adjustbox}

\usepackage{float}
\usepackage[linesnumbered,ruled,vlined]{algorithm2e}
\SetKwInput{KwInput}{Input}
\SetKwInput{KwOutput}{Output}
\SetKwInOut{Promise}{Promise}
\SetKwInput{Goal}{Goal}
\SetKwProg{Fn}{function}{}{}
\SetKwFor{RepTimes}{repeat}{times}{}
\SetKwFunction{Ver}{Verify}
\SetKwFunction{Prep}{PrepareState}
\SetKwComment{Comment}{/* }{ */}
\usepackage{algorithmic}

\newtcolorbox{myalgorithm}[1][]{
    colback=gray!10, 
    colframe=black, 
    arc=5pt, 
    boxrule=0.5pt, 
    left=0pt, right=0pt, top=0pt, bottom=0pt 
}
\newcommand{\Exp}{\mathop{\mathbb{E}}}

\usepackage[margin=1in]{geometry}

\usepackage{amsthm}
\usepackage{dsfont}
\usepackage{array}
\usepackage{makecell}

\newcommand{\poly}{\ensuremath{\mathsf{poly}}}

\newcommand{\id}{\ensuremath{\mathbb{I}}}

\usepackage{upgreek}
\usepackage{enumerate}

\usepackage[pagebackref]{hyperref}
\usepackage[usestackEOL]{stackengine}
\usepackage{thm-restate,mathrsfs}
\usepackage{enumerate}
\usepackage{array}
\usepackage{parskip}
\def\01{\{0,1\}}

\hypersetup{
	colorlinks,
	linkcolor={blue!100!black},
	citecolor={red!100!black},
}

\newcommand{\be}{\begin{equation}}
\newcommand{\ee}{\end{equation}}
\newcommand{\ba}{\begin{array}}
\newcommand{\ea}{\end{array}}
\newcommand{\bea}{\begin{eqnarray}}
\newcommand{\eea}{\end{eqnarray}}

\usepackage{mathtools}

\newcommand{\ra}{\rangle}
\newcommand{\la}{\langle}

\newcommand{\norm}[1]{\left\lVert#1\right\rVert}

\newcommand{\polylog}{\textsf{polylog}}
\newcommand{\Stab}{\textsf{Stab}}
\newcommand{\proj}[1]{\ket{#1}\!\bra{#1}}

\usepackage{amsthm}

\def\01{\{0,1\}}

\definecolor{citegreen}{HTML}{208054}
\definecolor{citeblue}{HTML}{0055cc}

\NineColors{saturation=high}
\hypersetup{
    breaklinks=true,   
    colorlinks=true, 
    linkcolor=blue3, 
    citecolor=green5, 
    urlcolor=blue3, 
}

\newtheorem{theorem}{Theorem}[section]

\newtheorem{lemma}[theorem]{Lemma}

\newtheorem{corollary}[theorem]{Corollary}

\newtheorem{fact}[theorem]{Fact}
\newtheorem{claim}[theorem]{Claim}

\usepackage{thm-restate,mathrsfs} 

\usepackage{tcolorbox}

\makeatletter
\def\widebreve{\mathpalette\wide@breve}
\def\wide@breve#1#2{\sbox\z@{$#1#2$}%
     \mathop{\vbox{\m@th\ialign{##\crcr
\kern0.08em\brevefill#1{0.8\wd\z@}\crcr\noalign{\nointerlineskip}%
                    $\hss#1#2\hss$\crcr}}}\nolimits}
\def\brevefill#1#2{$\m@th\sbox\tw@{$#1($}%
    \hss\resizebox{#2}{\wd\tw@}{\rotatebox[origin=c]{90}{\upshape(}}\hss$}
\makeatletter

\title{A note on tomography of states with low stabilizer rank}

\begin{document}

\author{
Arkopal Dutt\\
\small IBM Research, Cambridge \\
\small \texttt{arkopal@ibm.com}
}
\date{}
\maketitle
\begin{abstract}
In this note, we consider tomography of $n$-qubit quantum states $|\psi \rangle$ that are promised to have stabilizer rank $k$ i.e., admit a stabilizer decomposition of $\ket{\psi} = \sum_{i=1}^k c_i \ket{\phi_i}$ where each $\ket{\phi_i}$ is a stabilizer state. We give a tomography protocol for such states up to trace distance $\varepsilon > 0$ running in $\textsf{poly}(n,2^k,1/\varepsilon)$ time and can thus handle $k=O(\log n)$ efficiently. This follows from first showing that the stabilizer extent $\xi(\ket{\psi})$, which is the minimal $\ell_1$ norm over the coefficients $\{c_i\}$ of any stabilizer decomposition of $\ket{\psi}$, can be bounded as $\xi(\ket{\psi}) \leq 2^{O(k)}$ and improving upon the previous best bound of $k^{O(k)}$. The protocol is then obtained by utilizing a tomography protocol for states with bounded stabilizer extent.
\end{abstract}

\section{Introduction}\label{sec:intro}
Great strides have been made in obtaining efficient quantum state tomography protocols for quantum states that are promised to have compact stabilizer structure. In this note, we consider the setting when the unknown $n$-qubit quantum state is promised to have stabilizer rank $k$ or can be expressed as a superposition of at most $k$ many stabilizer states:
$$
\ket{\psi} = \sum_{i=1}^k c_i \ket{\phi_i}, \quad \text{where} \ket{\phi_i} \in \Stab, \,\, \text{ for all } i \in [k],
$$
where $\Stab$ denotes the set of $n$-qubit stabilizer states. The coefficients $c_i \in \mathbb{C}$ and the stabilizer components $\{\ket{\phi_i}\}$ need not be orthogonal. The notion of stabilizer rank was originally introduced for understanding classical simulability limits of quantum states and circuits~\cite{bravyi2016trading,bravyi2019simulation}. We have polynomial-time algorithms available when $k=1$ or $\ket{\psi}$ is itself a stabilizer state~\cite{montanaro2017learning,arunachalam2022optimal,arunachalam2026optimal,chang2026sample} and all the stabilizer components $\{\ket{\phi_i}\}$ are promised to have a large subgroup of Paulis stabilizing them (also called states with high stabilizer dimension)~\cite{grewal2023efficient,leone2024learningstab,hangleiter2024bell,chia2024efficient}. Recently, it was shown that states with stabilizer extent $\xi$, originally introduced in~\cite{bravyi2019simulation} and defined as the minimum~$\ell_1$ norm of the coefficients over all possible stabilizer decompositions of $\ket{\psi}$, could be learned in $\poly(n,\xi,1/\varepsilon)$~\cite{ad2026structure,ad2026tomoextent,adgg2026apfr}. However, this translates to a $\poly(n,k^k, 1/\varepsilon)$-time tomography protocol of stabilizer rank-$k$ states using known upper bounds on stabilizer extent for rank-$k$ states~\cite{kalra2025stabilizer}. The natural question is: Can we carry out tomography of stabilizer rank-$k$ states in $\poly(n,k,1/\varepsilon)$-time? In this note, I make progress on this question and show the~following.
\begin{theorem}\label{thm:tomo_rank}
Let $\varepsilon,\delta \in (0,1)$ and $k \in \mathbb{N}$. Suppose $\ket{\psi}$ is an unknown $n$-qubit state that is promised to have stabilizer rank $k$. Then, there is an algorithm that with probability $\geq 1-\delta$ outputs $\ket{\phi}$ which has stabilizer rank $\poly(2^k/\varepsilon)$ and satisfies 
$$
|\la \psi | \phi \ra|^2 \geq 1 - \varepsilon.
$$
The algorithm uses $\poly(n,2^k,1/\varepsilon,\log(1/\delta))$ time and copies.
\end{theorem}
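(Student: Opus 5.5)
The proof has two parts. First, a purely structural bound: every state of stabilizer rank $k$ has stabilizer extent $\xi(\ket{\psi}) \leq 2^{O(k)}$. Second, we feed this bound into the known tomography protocol for states of bounded extent~\cite{ad2026structure,ad2026tomoextent,adgg2026apfr}. That protocol runs in $\poly(n,\xi,1/\varepsilon,\log(1/\delta))$ time and copies, and outputs a state $\ket{\phi}$ with $|\la\psi|\phi\ra|^2\geq 1-\varepsilon$. The output is a superposition of $\poly(\xi/\varepsilon)$ stabilizer states, which gives the stated rank of $\poly(2^k/\varepsilon)$ once $\xi=2^{O(k)}$ is substituted. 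So the second part is a direct invocation, and almost all the work is in the extent bound.

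For the extent bound, take a minimal decomposition $\ket{\psi}=\sum_{i=1}^k c_i\ket{\phi_i}$ of the normalized state.

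1. Put every $\ket{\phi_i}$ in a common frame relative to $\ket{\phi_1}$. Apply a Clifford $C$ with $C\ket{\phi_1}=\ket{0^n}$; this changes neither stabilizer rank nor extent. Each $\ket{\phi_i}$ for $i\geq 2$ is then determined by an affine subspace, a quadratic form and a linear phase. The pairwise "difference" data between $\ket{\phi_1}$ and the $\ket{\phi_i}$ involves at most $O(k)$ independent directions, up to a further Clifford. So I would find a Clifford $D$ such that every $D\ket{\phi_i}$ has the form $\ket{\chi_i}\otimes\ket{\text{stab}}$, where $\ket{\chi_i}$ lives on $m=O(k)$ qubits and the second factor is the same stabilizer state for all $i$.
2. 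Using this common tensor factor, $\ket{\psi}$ is Clifford-equivalent to $\ket{\chi}\otimes\ket{\text{stab}}$ with $\ket{\chi}$ an $m$-qubit state. Extent is Clifford invariant and multiplicative when one factor is a stabilizer state, so $\xi(\ket{\psi})=\xi(\ket{\chi})$.
3. For any $m$-qubit normalized state, expanding in the computational basis and applying Cauchy--Schwarz gives $\xi\leq 2^{m}$, or $2^{m/2}$ using the square-root form of $\xi$. With $m=O(k)$ this yields $\xi\leq 2^{O(k)}$.

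The main obstacle is step 1: showing that $k$ stabilizer states can be simultaneously compressed onto $O(k)$ qubits. The natural tool is the stabilizer groups $S_i$. The intersection $\bigcap_i S_i$ has index at most $2^{O(k)}$ in each $S_i$ if we also allow the sign and phase data to vary. My first attempt would be induction on $k$. Adding one stabilizer state to a span that is already "supported" on $m$ qubits costs $O(1)$ new qubits, at least after passing to the commutant of the common stabilizer subgroup and then diagonalizing.

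If that fails because the intersection of the groups can collapse, the fallback is different. The bound $\xi\le 2^{O(k)}$ would then have to come from the fact that the span of $\{\ket{\phi_i}\}$ is a $k$-dimensional space. Write $\ket{\psi}$ in a well-conditioned basis of that span, chosen from the $\phi_i$ themselves by a greedy Gram--Schmidt procedure. Then bound the coefficient blow-up using the fact that stabilizer overlaps are either $0$ or $2^{-j/2}$. That quantization should keep the Gram-matrix conditioning at worst exponential in $k$ rather than $k^k$.
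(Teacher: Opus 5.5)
Your reduction to an extent bound and your use of the bounded-extent tomography theorem match the paper. The gap is in the extent bound itself: Step~1 is false already for $k=2$.

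\textbf{Counterexample to Step~1.} Take $\ket{\phi_1}=\ket{0^n}$ and $\ket{\phi_2}=\ket{+^n}$. Suppose a Clifford $D$ gave $D\ket{\phi_i}=\ket{\chi_i}\otimes\ket{s}$ with $\ket{\chi_i}$ on $m$ qubits; these factors are then themselves stabilizer states. Then $2^{-n/2}=|\la\phi_1|\phi_2\ra|=|\la\chi_1|\chi_2\ra|$. Overlaps of $m$-qubit stabilizer states are either $0$ or at least $2^{-m/2}$, so $m\ge n$.

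The same example breaks the rest of the plan:
\begin{itemize}
\item The stabilizer groups $\la Z_1,\ldots,Z_n\ra$ and $\la X_1,\ldots,X_n\ra$ intersect trivially, so your claim that $\bigcap_i S_i$ has index $2^{O(k)}$ fails.
\item A direct check shows $\ket{0^n}+\ket{+^n}$ is not an eigenvector of any non-identity Pauli. Hence it is not Clifford-equivalent to any $\ket{\chi}\otimes\ket{s}$ with a nontrivial stabilizer factor, and Step~2 fails too.
\item Step~3 therefore only gives $\xi\le 2^{O(n)}$, although this state has extent at most $\sqrt2$.
\end{itemize}
Low extent is not explained by compression onto few qubits.

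\textbf{The fallback does not repair this.} In a minimal decomposition the $\ket{\phi_i}$ are linearly independent. So a basis of their span chosen from them is just all of them, and Gram--Schmidt selection gives you no freedom. You would then need $\lambda_{\min}$ of their Gram matrix to be at least $2^{-O(k)}$ for every such family. You assert this without argument, and overlap quantization controls individual Gram entries, not the smallest eigenvalue. More fundamentally, the paper's decomposition does not stay inside $\mathrm{span}\{\ket{\phi_i}\}$ at all; it uses far more than $k$ terms.

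\textbf{The missing idea: an energy-decrement induction on $k$.}
\begin{itemize}
\item Fix $\ket{\phi_1}$ with stabilizer group $G$, and set $\psi_1=\psi-\la\phi_1|\psi\ra\ket{\phi_1}$.
\item Repeatedly split $\psi_t=\Pi^+_{P_t}\psi_t+\Pi^-_{P_t}\psi_t$ with $P_t\in G$.
\item Since $\psi_t\perp\ket{\phi_1}$, averaging over $G$ gives $\Exp_{P\in G}\norm{\Pi^+_P\psi_t}^2=\norm{\psi_t}^2/2$. So some $P_t$ halves the squared residual.
\item $\Pi^-_{P_t}$ annihilates $\ket{\phi_1}$, while the product of Pauli projectors sends each other $\ket{\phi_i}$ to zero or a multiple of a stabilizer state. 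Each split-off piece therefore has stabilizer rank at most $k-1$.
\item Geometric decay gives $\sum_t\norm{\Pi^-_{P_t}\psi_t}\le(1+\sqrt2)\norm{\psi_1}$.
\item Applying the induction hypothesis to each piece yields $C_k=\sqrt{1+(3+2\sqrt2)C_{k-1}^2}\le(1+\sqrt2)^k$.
\end{itemize}
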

\noindent To prove this, we give an improved upper bound on the stabilizer extent of stabilizer rank-$k$ states. This improves upon the existential result of~\cite{mehraban2024improved} and the previous best bound of $k^{O(k)}$~\cite{kalra2025stabilizer}.
\begin{theorem}\label{lem:ub_extent_rank}
Let $k \in \mathbb{N}$. For a stabilizer-rank $k$ state $\ket{\psi}$, its stabilizer extent is bounded as~$\xi(\ket{\psi}) \leq~2^{O(k)}$.    
\end{theorem}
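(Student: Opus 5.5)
The plan is to reduce the extent bound to a spectral bound on the Gram matrix of the stabilizer components. Write $\ket{\psi}=\sum_{i=1}^k c_i\ket{\phi_i}$ and first discard components: pick a maximal linearly independent subset of $\{\ket{\phi_i}\}$. This gives $d\le k$ stabilizer states spanning the same space $V$, and in that basis the coefficients $c\in\mathbb{C}^d$ of $\ket{\psi}$ are unique. Let $G_{ij}=\la\phi_i|\phi_j\ra$ be the Gram matrix. Then
\[
1=\la\psi|\psi\ra=c^\dagger G c\ \ge\ \lambda_{\min}(G)\,\|c\|_2^2,
\]
and hence
\[
\xi(\ket{\psi})\le \|c\|_1^2\le d\,\|c\|_2^2\le \frac{k}{\lambda_{\min}(G)}.
\]
It therefore suffices to prove the following: \emph{any $d\le k$ linearly independent stabilizer states have Gram matrix with $\lambda_{\min}(G)\ge 2^{-O(k)}$.} The rest of the argument is devoted to this bound.

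I would first put the $d$ states in a canonical form by a single Clifford $U$, which changes neither $G$ nor extents. The aim is for all $U\ket{\phi_i}$ to share a common product factor outside a register of $r=O(k)$ qubits, with the remaining qubits contributing only factors that make certain pairs exactly orthogonal or rescale overlaps uniformly. Concretely, I would take the Lagrangian subspaces $L_i\subseteq\mathbb{F}_2^{2n}$ of the stabilizer groups (ignoring signs) and their common intersection $L_\cap$. After a Clifford, $L_\cap$ is generated by $Z$'s on some qubits. On those qubits each state is a computational basis string. Distinct strings give exact orthogonality, which splits $G$ into blocks, so they can be handled blockwise.

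The complementary directions are where the pairwise overlaps $2^{-t_{ij}/2}\omega_{ij}$ come from, with $\omega_{ij}$ an eighth root of unity. Here I would try to show, using the fact that each $\phi_i$ is a phased indicator of an affine subspace with a quadratic phase, that the $d$ states are determined by $O(d)$ independent affine/quadratic "choices". The goal is that $G$ equals, up to a positive scalar per block, the Gram matrix of $d$ stabilizer states on $r=O(d)$ qubits, tensored with states whose mutual overlaps factor out. The $2^{-t/2}$ factors coming from large dimension differences are harmless, because overlaps that are tiny only make $G$ closer to the identity.

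Once the problem lives on $r=O(k)$ qubits, I would lower bound $\lambda_{\min}$ through the determinant. All entries of $2^{r/2}G$ lie in $\mathbb{Z}[\omega]$, $\omega=e^{i\pi/4}$, so $\det(2^{r/2}G)$ is a nonzero algebraic integer. Its Galois conjugates are determinants of conjugated Gram-like matrices with entries bounded by $2^{r/2}$, so a norm argument gives $|\det G|\ge 2^{-O(rd)}$. Combining this with $\lambda_{\max}\le d$ yields $\lambda_{\min}\ge |\det G|/d^{d-1}$. That gives only $2^{-O(k^2)}$, so I would instead bound $\lambda_{\min}$ directly: write $V$ as the span of $d$ vectors in the $2^r$-dimensional space with dyadic-type amplitudes, and choose the canonical form so that $d$ coordinates give an invertible $d\times d$ submatrix with $2^{-O(k)}$-bounded inverse.

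The main obstacle is this last step together with the canonical-form reduction, i.e. showing that near-linear-dependence among independent stabilizer states is at most $2^{-O(k)}$ and not $2^{-O(k^2)}$ or $k^{-O(k)}$. If the direct route stalls, I would fall back to an induction on $k$. Split $\ket{\psi}$ along a Pauli $P$ that stabilizes (up to sign) at least half of the components, namely $\ket{\psi}=\frac{1+P}{2}\ket{\psi}+\frac{1-P}{2}\ket{\psi}$. Each projected piece has rank at most $k-1$ plus controlled leftovers, and I would aim for a recursion of the form $\sqrt{\xi(k)}\le 2\sqrt{\xi(k-1)}$, which gives $2^{O(k)}$.
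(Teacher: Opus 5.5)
\textbf{There is a genuine gap: neither your main route nor your fallback closes.}

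\textbf{Main route.} Everything is funneled into the claim that any $d\le k$ linearly independent stabilizer states have $\lambda_{\min}(G)\ge 2^{-O(k)}$. This is never proved, as you concede. It is also strictly stronger than the theorem. It bounds the coefficients of the \emph{given} decomposition, whereas the extent may re-expand $\psi$ into many more stabilizer states, which is exactly what the paper does. Even a $1/\poly(d)$ version would give $\xi\le\poly(\chi)$, which the paper lists as open.

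The supporting steps do not hold as stated either.
\begin{itemize}
\item A common Clifford cannot in general confine the states to $r=O(k)$ qubits. For example, $\ket{0^n}$ and $\ket{+^n}$ have overlap $2^{-n/2}$, which no pair of stabilizer states on fewer than $n$ qubits can have.
\item Your hedge that ``tiny overlaps are harmless'' is a heuristic, not an argument. A rigorous version would need a splitting $U\ket{\phi_i}=\ket{\alpha_i}\otimes\ket{\beta_i}$ with the $\ket{\alpha_i}$ on $O(k)$ qubits and still linearly independent. Then Schur's bound $\lambda_{\min}(G_\alpha\circ G_\beta)\ge\lambda_{\min}(G_\alpha)$ would finish the reduction, but you do not show such a splitting exists.
\item Even granting the reduction, the algebraic-integer determinant argument gives only $2^{-O(k^2)}$, as you note. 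Your proposed fix (choose $d$ coordinates with a $2^{-O(k)}$-bounded inverse) simply restates the goal.
\end{itemize}

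\textbf{Fallback induction.} It fails as described. The projected decompositions of $\Pi_P^+\psi$ and $\Pi_P^-\psi$ both lose a term only if $P\ket{\phi_i}=\ket{\phi_i}$ and $P\ket{\phi_j}=-\ket{\phi_j}$ for some $i,j$. That forces $\la\phi_i|\phi_j\ra=\la\phi_i|P|\phi_j\ra=-\la\phi_i|\phi_j\ra=0$. So for pairwise non-orthogonal components, one branch keeps all $k$ terms.

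A Pauli stabilizing half the components (up to sign) need not exist either. Take $\ket{0^n}$, $\ket{+^n}$ and $\ket{+i}^{\otimes n}$. They are pairwise non-orthogonal, and their stabilizer groups up to sign are $\{I,Z\}^{\otimes n}$, $\{I,X\}^{\otimes n}$ and $\{I,Y\}^{\otimes n}$. These intersect pairwise only in the identity, so every non-identity Pauli stabilizes at most one of the three. The ``controlled leftovers'' are also never bounded in terms of $\|\psi\|$.

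\textbf{The missing idea.} Demand the rank drop on one branch only, and iterate on the other. The paper proceeds as follows.
\begin{itemize}
\item Fix $\ket{\phi_1}$ and subtract its component $\beta_1\ket{\phi_1}$.
\item Repeatedly apply $\Pi^+_{P_t}$, with $P_t$ taken from the stabilizer group $G$ of $\ket{\phi_1}$. Each discarded piece $\Pi^-_{P_t}\psi_t$ annihilates $\ket{\phi_1}$, so it has rank at most $k-1$.
\item Since $\Exp_{P\in G}\Pi_P^+=(\id+\proj{\phi_1})/2$ and the residual stays orthogonal to $\ket{\phi_1}$, some $P_t$ halves the residual's squared norm.
\item Hence the residual vanishes within $n$ steps, and $\sum_t\|\Pi^-_{P_t}\psi_t\|\le(1+\sqrt2)\|\psi_1\|$ independently of $n$.
\end{itemize}
Combining this with the homogeneous hypothesis $\xi(\omega)\le C_{k-1}\|\omega\|$ for unnormalized $\omega$ of rank at most $k-1$ gives $C_k^2=1+(3+2\sqrt2)C_{k-1}^2$, hence $C_k\le(1+\sqrt2)^k$.

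A minor point: with the paper's unsquared convention, your first display should read $\xi\le\|c\|_1$.
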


\section{Proof of main result}\label{sec:proof}
\subsection{Preliminaries}

\paragraph{Notation.} 
We write $[m]=\{1,\ldots,m\}$,  use $\|\cdot\|$ for the Euclidean norm of a vector and $\|\cdot\|_1$ for the trace norm of an operator. We will work with the $n$-qubit Pauli group of $\{\pm i, \pm 1\} \times \{\id,X,Y,Z\}^{\otimes n}$ where the single qubit Pauli operators of $\id,X,Y,Z$ have their usual meanings. For a stabilizer state $|\phi\rangle$, let $G_\phi$ be its signed stabilizer group, consisting of the $2^n$ commuting Hermitian Paulis satisfying $P|\phi\rangle=|\phi\rangle$. For independent generators $P_1,\ldots,P_n$ of $G_\phi$, we can express the Pauli decomposition of the density matrix corresponding to~$\ket{\phi}$~as
\begin{equation}\label{eq:stab_density_matrix}
 |\phi\rangle\langle\phi| = \prod_{j=1}^n\frac{I+P_j}{2} = \frac1{2^n}\sum_{P\in G_\phi}P.
\end{equation}
For a Hermitian Pauli $P$, we also write $\Pi_P^\pm=(I\pm P)/2$ for its two measurement-outcome projectors. We will often use the following fact.
\begin{fact}\label{fact:nv2-postselection}
For any Hermitian Pauli $P$ and $n$-qubit stabilizer state $|\phi\rangle$, each vector $\Pi_P^\pm|\phi\rangle$ is zero or a scalar multiple of a stabilizer state.
\end{fact}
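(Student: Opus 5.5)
The plan is to split on whether $P$ (up to sign) already belongs to the stabilizer group $G_\phi$, and to use the fact that $G_\phi$ is a maximal commuting set of Hermitian Paulis: any Hermitian Pauli commuting with every element of $G_\phi$ must equal $\pm Q$ for some $Q \in G_\phi$.

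\emph{Case 1: $P\in G_\phi$ or $-P\in G_\phi$.} Then $P|\phi\rangle=\pm|\phi\rangle$. Hence $\Pi_P^\pm|\phi\rangle$ is either $|\phi\rangle$ itself or $0$, and we are done.

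\emph{Case 2: $\pm P\notin G_\phi$.} By maximality, $P$ anticommutes with some element of $G_\phi$.
\begin{enumerate}
\item Pick generators $P_1,\ldots,P_n$ of $G_\phi$ so that $P$ anticommutes with $P_1$ and commutes with $P_2,\ldots,P_n$. Starting from any generating set, we can arrange this by replacing each generator that anticommutes with $P$ by its product with $P_1$, where $P_1$ is one fixed anticommuting generator. This preserves independence and the generated group.
\item Since $P$ anticommutes with $P_1$ and $P_1|\phi\rangle=|\phi\rangle$, we get $\langle\phi|P|\phi\rangle=\langle\phi|PP_1|\phi\rangle=-\langle\phi|P_1P|\phi\rangle=-\langle\phi|P|\phi\rangle$, so $\langle\phi|P|\phi\rangle=0$. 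Therefore $\|\Pi_P^\pm|\phi\rangle\|^2=\tfrac12$, and the vector is nonzero.
\item Let $|\chi\rangle=\sqrt2\,\Pi_P^\pm|\phi\rangle$. It is stabilized by $\pm P$, since $P\,\Pi_P^\pm=\pm\Pi_P^\pm$. It is also stabilized by each $P_j$ with $j\ge2$, because $P_j$ commutes with $\Pi_P^\pm$ and fixes $|\phi\rangle$.
\item The operators $\pm P,P_2,\ldots,P_n$ pairwise commute.
\item They are independent. Every element of $\langle P_2,\ldots,P_n\rangle$ commutes with $P_1$, whereas $\pm P$ does not, so $\pm P$ is not in that subgroup.
\item The generated group does not contain $-I$, because it fixes the nonzero vector $|\chi\rangle$.
\end{enumerate}
Thus these $n$ operators generate a full stabilizer group, whose unique $+1$ eigenvector is a stabilizer state. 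That eigenvector must be $|\chi\rangle$ up to phase. Hence $\Pi_P^\pm|\phi\rangle=\tfrac1{\sqrt2}|\chi\rangle$ is a scalar multiple of a stabilizer state.

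The only step needing care is the maximality claim used to enter Case 2. It follows from a counting argument: the centralizer of an $n$-dimensional isotropic subspace of $\mathbb{F}_2^{2n}$ is that subspace itself. I would cite this as standard rather than reprove it.
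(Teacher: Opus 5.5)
Your proof is correct and follows the paper's argument: the same case split on whether $\pm P\in G_\phi$, the same rearrangement of generators so that only $P_1$ anticommutes with $P$, the vanishing of $\langle\phi|P|\phi\rangle$, and stabilization of the projected vector by $\pm P,P_2,\ldots,P_n$. You also verify details the paper leaves implicit (independence, absence of $-I$, and why $P$ must anticommute with some stabilizer); that last point can avoid the maximality citation by splitting instead on whether $P$ commutes with all of $G_\phi$, since then $P$ commutes with $|\phi\rangle\langle\phi|=2^{-n}\sum_{Q\in G_\phi}Q$ and so $|\phi\rangle$ is a $\pm1$ eigenvector of $P$.
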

\begin{proof}
If $P$ or $-P$ belongs to $G_\phi$, then $|\phi\rangle$ is an eigenstate
of $P$, and each projected vector is either zero or $|\phi\rangle$.
Otherwise $P$ anticommutes with some stabilizer generator. Choose
generators $g_1,\ldots,g_n$ so that $g_1$ anticommutes with $P$ and all
the others commute with $P$. Such a choice is obtained by multiplying
any other anticommuting generator by $g_1$.

Anti-commutation implies $\langle\phi|P|\phi\rangle=0$, so both outcomes have probability $1/2$. The outcome of $\Pi_P^\pm|\phi\rangle$ is stabilized by $\pm P,g_2,\ldots,g_n$, a set of $n$ independent commuting Pauli
generators. It is therefore a scalar multiple of a stabilizer state. 
\end{proof}

\paragraph{Stabilizer rank and extent.}
For any $n$-qubit state vector $\psi\in(\mathbb C^2)^{\otimes n}$, not necessarily normalized, we define its stabilizer rank denoted by $\chi(\psi)$ and stabilizer extent denoted by $\chi(\psi)$ as
\begin{align}
\chi(\psi) &:=\min\left\{r:\psi=\sum_{i=1}^r c_i|\phi_i\rangle,\
                  |\phi_i\rangle\in\Stab\right\},
       \label{eq:nv2-rank}\\
\xi(\psi)
&:=\min\Big\{\sum_i|c_i|: \psi=\sum_i c_i|\phi_i\rangle,\
                |\phi_i\rangle\in\Stab\Big\}.
       \label{eq:nv2-extent}
\end{align}
In this note, $\xi$ is the \emph{unsquared} coefficient $\ell_1$ norm. We will use the convention of $\xi(0)=\chi(0)=0$. For $z\ne0$, $\chi(z\psi)=\chi(\psi)$. We also have the following properties of stabilizer extent which can be checked by observation:
\begin{equation}\label{eq:nv2-norm-properties}
 \xi(z\psi)=|z|\xi(\psi),\qquad
 \xi(\psi+\omega)\le\xi(\psi)+\xi(\omega),\qquad
 \xi(|\phi\rangle)=1\quad(|\phi\rangle\in\operatorname{Stab}).
\end{equation}
Indeed, every decomposition satisfies $\|\psi\|\le\sum_i|c_i|$ by the triangle inequality, so $\|\psi\|\le\xi(\psi)$; the one-term expansion proves the last equality in \eqref{eq:nv2-norm-properties}. We will use the following prior tomography result that gives a tomography protocol for states promised to have bounded stabilizer extent.
\begin{theorem}[\cite{adgg2026apfr,ad2026tomoextent}]\label{thm:tomo_extent}
Let $\varepsilon,\delta \in (0,1)$ and $\xi > 0$. Suppose $\ket{\psi}$ is an unknown $n$-qubit state with $\xi(\ket{\psi}) \leq \xi$. Then, there is an algorithm that with probability $\geq 1-\delta$, outputs a $\poly(\xi/\varepsilon)$-stabilizer-rank state $\ket{\phi}$ such that
$$
|\la \psi | \phi \ra|^2 \geq 1 - \varepsilon.
$$
The algorithm has $\poly(n,\xi,1/\varepsilon,\log(1/\delta))$ sample and time complexity.
\end{theorem}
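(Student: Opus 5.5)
Although this is a cited result, I would prove it in two stages. First, I would show that a good low-rank approximation exists. Second, I would find one algorithmically using a conditional-gradient (Frank--Wolfe) scheme over the convex set $K_\xi=\{\sum_i c_i|\phi_i\rangle:\sum_i|c_i|\le \xi,\ |\phi_i\rangle\in\Stab\}$, which contains $\ket{\psi}$ by the promise and \eqref{eq:nv2-norm-properties}.

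\textbf{Existence.} Write $\ket{\psi}=\sum_i c_i\ket{\phi_i}$ with $\sum_i|c_i|\le\xi$. Sample $m$ indices i.i.d.\ with probability $|c_i|/\xi$, and average the vectors $\xi\,(c_i/|c_i|)\ket{\phi_i}$. The resulting vector $\phi_m$ satisfies $\Exp\|\psi-\phi_m\|^2\le \xi^2/m$. Taking $m=O(\xi^2/\varepsilon)$ gives a rank-$m$ vector at distance $O(\sqrt{\varepsilon})$. Normalizing it yields fidelity $\ge 1-\varepsilon$, so a $\poly(\xi/\varepsilon)$-rank target exists. A by-product is $\max_{\phi\in\Stab}|\langle\phi|\psi\rangle|\ge 1/\xi$, because $1=\langle\psi|\psi\rangle\le\sum_i|c_i||\langle\phi_i|\psi\rangle|$.

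\textbf{Algorithm.} I would maintain a classical description $\phi_t=\sum_{j\le t}a_j\ket{s_j}\in K_\xi$. At step $t$, Frank--Wolfe asks for a stabilizer state $s$ and phase $\theta$ that approximately maximize $\mathrm{Re}\,e^{i\theta}\big(\langle s|\psi\rangle-\langle s|\phi_t\rangle\big)$. The term $\langle s|\phi_t\rangle$ is computable classically, since inner products of stabilizer states are efficiently computable. The term $\langle s|\psi\rangle$ can be estimated to additive precision $\eta$ with $\poly(1/\eta)$ copies. I would estimate it via a Hadamard-type test against a preparation of $\ket{s}$, or, avoiding phase references, via fidelity estimates of $\psi$ with $\ket{s}$ and with stabilizer superpositions of $\ket{s}$ and $\ket{s_j}$. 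I would then update with step size $2/(t+2)$. The standard analysis of Frank--Wolfe with an additive-$\eta$ approximate linear oracle gives
\[
\|\psi-\phi_T\|^2\le O\!\left(\frac{\xi^2}{T}\right)+O(\xi\eta).
\]
Hence $T=O(\xi^2/\varepsilon)$ and $\eta=\varepsilon/\xi$ give rank $T$ and the claimed fidelity after normalization. The costs are polynomial in $n,\xi,1/\varepsilon$, and a union bound over the $T$ steps accounts for the $\log(1/\delta)$ term.

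\textbf{Main obstacle: the oracle.} We cannot search all of $\Stab$. The fix I would try first is a candidate-list step. By Bell difference sampling on copies of $\ket{\psi}$, I would collect Paulis from the (doubled) characteristic distribution. Any stabilizer state $\ket{s}$ with $|\langle s|\psi\rangle|\ge\tau$ has its stabilizer group hit with probability $\poly(\tau)$ per sample. Therefore $\poly(n,1/\tau)$ samples followed by an enumeration of isotropic subspaces yield a list $L$ of size $\poly(1/\tau)$ containing, up to sign, every such $\ket{s}$; I would obtain the signs by measuring the candidate stabilizers directly.

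The remaining issue is to justify restricting the oracle to $L$ with $\tau=\poly(\varepsilon/\xi)$. The optimal direction could have tiny overlap with $\psi$ and be favored only through the known term $-\langle s|\phi_t\rangle$. For that term I would separately include the known $s_j$ and their Fact~\ref{fact:nv2-postselection}-type relatives in the candidate set. I would then argue that any $s$ outside $L\cup\{s_j\}$ has a linear-objective value at most $O(\tau)$ plus that of a listed candidate. I expect this step to be the delicate part: it requires showing that the Bell-sampling list is stable enough for the Frank--Wolfe gap to remain $\Omega(\|\psi-\phi_t\|^2/\xi)$.
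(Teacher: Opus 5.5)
Your existence argument (Maurey-type sampling, giving rank $O(\xi^2/\varepsilon)$) is fine, and so is the Frank--Wolfe analysis assuming an oracle. The gap is exactly where you flag it: the linear-minimization oracle over $\Stab$ is never implemented, and the mechanism you propose for it is false.

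\textbf{The list claim fails.} A list of size $\poly(1/\tau)$ containing \emph{every} stabilizer state with $|\langle s|\psi\rangle|\ge\tau$ need not exist. Take $\psi=\ket{0^n}$ (extent $1$) and $\tau=1/\sqrt2$. Then all $4(2^n-1)$ stabilizer states $(\ket{0^n}+i^a\ket{x})/\sqrt2$ with $x\neq 0^n$ qualify. The flaw is the step from ``a Bell-difference sample lands in $G_s$ with probability $\poly(\tau)$'' to ``the samples span $G_s$''. In this example every sample is a $Z$-string, so the non-diagonal part of each such $G_s$ is never observed and must be guessed among exponentially many completions. This is why the known agnostic stabilizer learners (stabilizer bootstrapping) guarantee only \emph{some} good candidate, at cost $(1/\tau)^{O(\log(1/\tau))}$.

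\textbf{The oracle is too strong even at the first step.} Your scheme needs near-optimality to additive precision $\eta=\varepsilon/\xi$. Already at $t=0$, with $\phi_0=0$, this is agnostic tomography of stabilizer states with $\tau\approx 1/\xi$, and you give no $\poly(n,\xi,1/\varepsilon)$ procedure for it. (Smaller point: a Hadamard test is unavailable because you hold only copies of $\psi$. Also, $\langle s|\psi\rangle$ is meaningful only after fixing the global phase, e.g.\ relative to $s_1$; your fidelity-based alternative handles this.)

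\textbf{The residual is inaccessible.} For $t\ge 1$ the oracle must act on $\psi-\phi_t$, of which you have no copies, and Bell sampling on $\psi$ says nothing direct about its stabilizer correlations. Your reduction relies on the false ``every'' property of $L$; otherwise $s\notin L$ does not imply $|\langle s|\psi\rangle|<\tau$. Even granting it, nothing bounds $|\langle s|\phi_t\rangle|$, for $s$ outside $\{s_j\}$ and their Fact~\ref{fact:nv2-postselection}-type relatives, by the value of a listed candidate. A stabilizer state close to a combination of several $s_j$ is a natural obstruction, and maximizing $|\langle s|\phi_t\rangle|$ over $\Stab$ for a known rank-$t$ vector is itself a search you do not address.

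\textbf{What the paper relies on.} The paper does not prove this theorem; it imports it from \cite{adgg2026apfr,ad2026tomoextent}. Its closing remark indicates the argument there is of energy-increment type, and such an argument tolerates polynomial losses in the oracle. If $r=(\id-\Pi_V)\psi$ is the residual after projecting onto the span $V$ of the stabilizer states found so far, then
\[
\|r\|^2=\langle r|\psi\rangle\le \xi\max_{s}|\langle s|r\rangle|,
\]
so a good $s$ always exists. Adding any $s$ with $|\langle s|r\rangle|\ge\gamma$ to $V$ lowers $\|r\|^2$ by at least $\gamma^2$. So it suffices to find, efficiently, some stabilizer state whose overlap with the current residual is polynomially related to $\|r\|^2/\xi$. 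That correlation-finding primitive for residuals is the missing ingredient, and it is where the substantial machinery of the cited works enters. Your additive-precision Frank--Wolfe oracle asks for more than this, and you give no way to meet it.
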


\subsection{Upper bound on stabilizer extent}
We now prove Theorem~\ref{lem:ub_extent_rank} by showing a slightly more general result from which it will directly follow.
\begin{lemma}\label{lem:ub_extent}
Let $k \in \mathbb{N}$. For every $n$-qubit vector $\psi$ with $\chi(\psi) \le k$,
$$
\xi(\psi) \leq 2^{O(k)} \norm{\psi}.
$$
Every normalized quantum state $\ket{\psi}$ with stabilizer rank $k$ then satisfies $\xi(\ket{\psi}) \leq 2^{O(k)}$.
\end{lemma}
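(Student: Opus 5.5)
The plan is to prove the first (unnormalized) statement by induction on $k$, showing $\xi(\psi)\le C^{k}\|\psi\|$ for an absolute constant $C$. The normalized statement then follows immediately. The base case $k=1$ is \eqref{eq:nv2-norm-properties}: $\psi=c|\phi\rangle$ gives $\xi(\psi)=|c|=\|\psi\|$.

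For the inductive step, write $\psi=\sum_{i=1}^k c_i|\phi_i\rangle$ with the $|\phi_i\rangle$ pairwise distinct up to phase; otherwise the rank drops and we are done. The basic move is a single Pauli split. Pick a Hermitian Pauli $P$ with $P\in G_{\phi_1}$ and $\pm P\notin G_{\phi_2}$; such a $P$ exists because the two groups differ. Then $\psi=\Pi_P^+\psi+\Pi_P^-\psi$, the two pieces are orthogonal, and so $\|\Pi_P^\pm\psi\|\le\|\psi\|$. By Fact~\ref{fact:nv2-postselection}, each $\Pi_P^\pm|\phi_i\rangle$ is zero or a multiple of a stabilizer state, so ranks never increase under projection. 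Moreover $\Pi_P^-|\phi_1\rangle=0$, so $\chi(\Pi_P^-\psi)\le k-1$ and induction gives $\xi(\Pi_P^-\psi)\le C^{k-1}\|\Pi_P^-\psi\|$.

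The $+$ branch still has rank up to $k$, but the images $\Pi_P^+|\phi_1\rangle$ and $\Pi_P^+|\phi_2\rangle$ now share the extra stabilizer $P$. Iterating, one runs a sequence of commuting (after adjustment) projections until two components coincide up to phase, at which point the rank drops. The discarded branches are mutually orthogonal pieces of $\psi$, so their squared norms sum to at most $\|\psi\|^2$. By Cauchy--Schwarz, the total contribution is at most $C^{k-1}\sqrt{m}\,\|\psi\|$, where $m$ is the number of split steps.

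The main obstacle is that, done naively, $m$ can be as large as $n$. This gives an $n$-dependent bound, not $2^{O(k)}$. To remove it, I would first reduce to an effective dimension depending only on $k$:
\begin{itemize}
\item Let $T$ be the subgroup of unsigned Paulis lying (up to sign) in every $G_{\phi_i}$.
\item Each $G_{\phi_i}$ is determined by $T$ together with a bounded number of extra generators.
\item I would try to show, by a linear-algebra argument over $\mathbb F_2^{2n}$ on the $k$ Lagrangian subspaces, that after a Clifford the $k$ states become $|\text{fixed}\rangle\otimes|\phi_i'\rangle$ up to sign choices on $T$. Here the $|\phi_i'\rangle$ live on $O(k)$ qubits, or more precisely on a subspace where only $O(k)$ independent ``distinguishing'' Paulis matter.
\end{itemize}
Roughly, each new state should add only $O(1)$ new relevant directions after projecting onto the common sign sectors of $T$. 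Splitting $\psi$ along the $2^{O(k)}$ sign sectors of the relevant generators of $T$ would replace the $\sqrt{m}$ loss with a $2^{O(k)}$ one.

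Within this reduced picture, the recursion becomes $\xi(k)\le 2^{O(1)}\,\xi(k-1)$, or at worst $\xi(k)\le 2^{O(k)}\xi(k-1)$ done carefully so that the exponents sum to $O(k)$. This yields $\xi(\psi)\le 2^{O(k)}\|\psi\|$.

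If the dimension reduction proves too lossy, my fallback is to compare directly with an orthonormal basis of stabilizer states spanning $\mathrm{span}\{|\phi_i\rangle\}$ inside an $O(k)$-qubit effective system. There every unit vector has extent at most $2^{O(k)}$ by expanding in the computational basis of those qubits.
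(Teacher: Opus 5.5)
Your skeleton matches the paper's: induction on $k$, splitting with $\Pi_P^\pm$ for $P\in G_{\phi_1}$, and using that $\Pi_P^-$ kills $|\phi_1\rangle$ so the minus branch has rank $\le k-1$. However, the step you rely on to remove the dependence on the number of splits is false, so there is a genuine gap.

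Take $k=2$, $|\phi_1\rangle=|0^n\rangle$, $|\phi_2\rangle=|+^n\rangle$. The common subgroup $T$ is trivial, so neither group is ``$T$ plus $O(1)$ generators''. Worse, no Clifford maps both states to $|\mathrm{fixed}\rangle\otimes|\phi_i'\rangle$ with a common fixed factor and $|\phi_i'\rangle$ on $m$ qubits unless $m\ge n$. Nonzero overlaps of $m$-qubit stabilizer states have modulus at least $2^{-m/2}$, while $|\langle 0^n|+^n\rangle|=2^{-n/2}$. The same obstruction rules out your fallback. Indeed, for $n\ge 3$ the only stabilizer states in $\mathrm{span}\{|0^n\rangle,|+^n\rangle\}$ are these two non-orthogonal vectors (up to scalars), so the span has no orthonormal stabilizer basis at all. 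In this example your separating splits really take $n$ rounds before the components coincide. Orthogonality plus Cauchy--Schwarz then only gives $C_k\le\sqrt{n+1}\,C_{k-1}$, i.e.\ an $n^{O(k)}$ bound.

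The missing idea is that you do not need to bound the number of splits. You need the discarded norms to decay geometrically, so that the sum of the norms (not just of their squares) is $O(\|\psi\|)$. The paper does this as follows:
\begin{enumerate}
\item It first peels off $\beta_1|\phi_1\rangle$ with $\beta_1=\langle\phi_1|\psi\rangle$. The residual is then orthogonal to $|\phi_1\rangle$ and stays so under every $\Pi_P^+$ with $P\in G_{\phi_1}$.
\item It chooses each $P_t\in G_{\phi_1}$ to shrink the residual rather than to separate $\phi_1$ from $\phi_2$. Since $\Exp_{P\in G_{\phi_1}}\Pi_P^+=(\id+\proj{\phi_1})/2$, some $P_t$ satisfies $\|\Pi_{P_t}^+\psi_t\|^2\le\|\psi_t\|^2/2$.
\item The discarded piece $\sigma_{t+1}=\Pi_{P_t}^-\psi_t$ still has rank at most $k-1$, because $\Pi_{P_t}^-$ annihilates $|\phi_1\rangle$.
\item From $\|\sigma_{t+1}\|^2=\|\psi_t\|^2-\|\psi_{t+1}\|^2$ and $\|\psi_{t+1}\|\le\|\psi_t\|/\sqrt2$ one gets $\|\sigma_{t+1}\|\le(1+\sqrt2)(\|\psi_t\|-\|\psi_{t+1}\|)$.
\end{enumerate}
The last inequality telescopes to $(1+\sqrt2)\sqrt{\|\psi\|^2-|\beta_1|^2}$ however many steps (up to $n$) are taken. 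Combining this with the $|\beta_1|$ term via Cauchy--Schwarz gives $C_k^2=1+(3+2\sqrt2)C_{k-1}^2$, hence $C_k\le(1+\sqrt2)^k$.
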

\begin{proof}
We will prove this via induction on $k$. For $\psi$ being the zero vector, the result is immediate. For $k=1$, $\psi$ is a scalar multiple of a stabilizer state, and hence $\xi(\psi)=\|\psi\|$. Suppose $k\geq 2$. Our induction hypothesis is that for every $n$-qubit vector $\psi$
\begin{equation}\label{eq:induction-hypothesis}
\chi(\psi)\le k-1 \quad\Longrightarrow\quad \xi(\psi) \leq C_{k-1}\,\|\psi\|,
\end{equation}
where $C_{k-1} \geq 1$ is some constant, which we will fix later, that depends only on $(k-1)$. We now prove the corresponding implication with $C_k$ for $\chi(\psi)\leq k$. Let $\psi$ have the following stabilizer decomposition
\begin{equation}\label{eq:stab_decomp_psi}
\psi = \sum_{i=1}^k c_i \ket{\phi_i}, \quad \ket{\phi_i} \in \Stab \,\,\forall i \in [k]
\end{equation}
Starting from the above decomposition, our goal will be to now obtain an alternate stabilizer decomposition of $\psi$ which may have a much higher stabilizer rank than $k$ but for which we can bound the stabilizer extent. 

We will do this iteratively over $\kappa \leq n$ many iterations. We will define the stopping criterion shortly. We first fix $\ket{\phi_1}$ as a reference stabilizer component with respect to which we will define each iteration. Let us denote $\beta_1 = \la \phi_1 | \psi \ra$ and let $G$ be the Abelian group of Paulis that stabilize $\ket{\phi_1}$. In each iteration $t$, starting from iteration $t = 1$ and up to $n$, we will sequentially define Paulis $P_t \in G$ that are independent of $\{P_1,\ldots,P_{t-1}\}$. Moreover, we define the \emph{residual} state vector at the end of iteration $t$ after determining $P_t$ as
\begin{equation}
\psi_{t+1} := \Pi_{P_t}^+ \psi_t
\end{equation}
where $\Pi_P^{\pm} = (\id \pm P)/2$ and we will define the initial residual state $\psi_1$ as
$$
\psi_1 := (\id - \ket{\phi_1}\!\bra{\phi_1})\psi = \psi - \beta_1 \ket{\phi_1}.
$$
Note that by definition, $\la \psi_1 | \phi_1 \ra = 0$. Corresponding to each residual state vector $\psi_t$, we define $\sigma_{t+1} := (\id - \Pi_{P_t}^+) \psi_t = \Pi_{P_t}^{-} \psi_t$ with $\sigma_1 = \beta_1 \ket{\phi_1}$. Note that $\la \sigma_t | \psi_t \ra = 0, \forall t \in [n]$ by definition. We stop in iteration $\kappa$ when the residual state vector obtained at the end satisfies
\begin{equation}\label{eq:stopping_criterion}
\norm{\psi_{\kappa + 1}}^2 = 0
\end{equation}
We now discuss how $P_t$ is determined in an iteration $t$ and the resulting properties of $\psi_{t+1}$ and~$\sigma_{t+1}$.

\begin{claim}\label{claim:norm_drop}
Suppose we are in iteration $t \geq 1$ and have not yet stopped. Then, there exists a Pauli $P_t \neq \id \in G$ independent of $\{P_i\}_{i \in [t-1]}$ such that 
\begin{enumerate}[$(i)$]
    \item $\norm{\psi_{t+1}}^2 \leq \norm{\psi_t}^2/2$, and
    \item $\chi(\sigma_{t+1}) \leq k - 1$.
\end{enumerate}
\end{claim}
\begin{proof}
For a uniformly random $P \in G$, we observe that
\begin{align}
\Exp_{P \in G} \left[ \norm{\Pi_P^+ \psi_t}^2 \right] &= \Exp_{P \in G} \left[ \la \psi_t | \Pi_{P}^+ | \psi_t \right] \nonumber \\
&= \left\la \psi_t \left| \frac{\id}{2} + \frac{1}{2^{n+1}} \sum_{P \in G} P \right| \psi_t \right\ra \nonumber \\
& =\left\langle \psi_t\left| \frac{I+\proj{\phi_1}}2\right|\psi_t\right\rangle \nonumber \\
& = \frac{1}{2} \norm{\psi_t}^2, \label{eq:exp_norm_residual}
\end{align}
where we used $\Pi_P^+$ is an orthogonal projector in the first equality, Eq.~\eqref{eq:stab_density_matrix} in the third equality and finally the fact that $\la \phi_1 | \psi_t \ra = 0$ which is obtained from direct evaluation:
$$
\la \phi_1 | \psi_t \ra = \la \phi_1 | \Big( \Pi_{s=1}^{t-1} \Pi_{P_s}^+ \Big) \ket{\psi_1} = \la \phi_1 | \psi_1 \ra = 0,
$$
where we used the definition of $\psi_t$ in the first equality, that $P_s \ket{\phi_1} = \ket{\phi_1}$ since $P_s \in G, \,\,\forall s \in [t-1]$ in the second equality and that $\la \phi_1 | \psi_1 \ra = 0$ by definition of $\psi_1$. 

From Eq.~\eqref{eq:exp_norm_residual}, we now note that there must be at least one Pauli $P_t$ such that
$$
\norm{\Pi_{P_t}^+\psi_t}^2\leq \|\psi_t\|^2/2
$$
Since $\psi_t \neq 0$ (or else we would have already stopped), this Pauli $P_t$ cannot be $\id$. Moreover, every element of the subgroup $G_t := \la \{P_i\}_{i \in [t-1]} \ra$ generated by $\{P_i\}_{i \in [t-1]}$ satisfies $\Pi_Q^{+} \psi_t = \psi_t, \, \forall Q \in G_t$ and therefore would leave its squared norm unchanged. Since $\psi_t\ne0$, $P_t$ must be independent of the previous choices and is not the identity. This proves item $(i)$.

For item $(ii)$, we now evaluate the expression of $\sigma_{t+1} = \Pi_{P_t}^{-} \psi_t$ which is
\begin{align}
\sigma_{t+1} =\Pi_{P_t}^- \left(\prod_{h=1}^{t-1}\Pi_{P_h}^+\right)\psi =\sum_{i=2}^k c_i\,
  \Pi_{P_t}^-\left(\prod_{h=1}^{t-1}\Pi_{P_h}^+\right) \ket{\phi_i},
\label{eq:v5-rankdrop}
\end{align}
where we noted that $P_h \ket{\phi_1} = \ket{\phi_1}, \forall h \in [t-1]$ and $\Pi_{P_t}^-\ket{\phi_1}=0$. Using Fact~\ref{fact:nv2-postselection}, each of the remaining $k-1$ projected components is zero or a scalar multiple of a stabilizer. Absorbing these scalars, including their phases, into the coefficients gives a stabilizer decomposition of $\sigma_{t+1}$ with at most $k-1$ terms. This proves $(ii)$ and the overall proof.
\end{proof}

We now complete the induction using the above claim. We first note that $\kappa \leq n$ since running $n$ iterations would generate $n$ independent Paulis generating $G$, and hence we would have
\[
\psi_{n+1}
=\left(\prod_{h=1}^n\Pi_{P_h}^+\right)\psi_1
=\proj{\phi_1}\psi_1=0.
\]
Thus the stopping index is well defined and $\kappa\le n$. Iterating the definitions of $\psi_t,\sigma_t$, together with the fact that initially $\psi=\sigma_1+\psi_1$, gives
\begin{equation}\label{eq:v5-reconstruction}
\psi=\sum_{j=1}^{t+1}\sigma_j+\psi_{t+1}
\quad(0\le t\le\kappa),
\qquad
\psi=\sum_{j=1}^{\kappa+1}\sigma_j.
\end{equation}
Here $\sigma_1$ is the contribution of the reference stabilizer state $\ket{\phi_1}$, while each
subsequent $\sigma_j$ has rank at most $k-1$ by Claim~\ref{claim:norm_drop}. Note that these subsequent vectors need not individually be stabilizer states but they are promised to have stabilizer rank $\leq k-1$ to which we can apply the induction. Orthogonality of $\psi_t,\sigma_t$ and item $(i)$ of Claim~\ref{claim:norm_drop} gives for each $1\le t\le\kappa$,
\begin{equation}\label{eq:v5-energy}
\|\sigma_{t+1}\|^2=\|\psi_t\|^2-\|\psi_{t+1}\|^2,
\qquad
\|\psi_{t+1}\|\le\frac{\|\psi_t\|}{\sqrt2}.
\end{equation}
Using Eq.~\eqref{eq:v5-energy}, we obtain
\begin{equation}\label{eq:v5-ratio}
\frac{\|\sigma_{t+1}\|}{\|\psi_t\|-\|\psi_{t+1}\|}
=\sqrt{\frac{\|\psi_t\|+\|\psi_{t+1}\|}
                 {\|\psi_t\|-\|\psi_{t+1}\|}}
\le\sqrt{\frac{1+1/\sqrt2}{1-1/\sqrt2}}=1+\sqrt2,
\end{equation}
where we have used that $\psi_t\ne0$ at every executed iteration so the denominator in the above ratio is positive. Consequently, we then have that
\begin{align}
\sum_{t=1}^{\kappa}\|\sigma_{t+1}\| \le(1+\sqrt2)\sum_{t=1}^{\kappa}
       \bigl(\|\psi_t\|-\|\psi_{t+1}\|\bigr) =(1+\sqrt2)\|\psi_1\|
 =(1+\sqrt2)\sqrt{\|\psi\|^2-|\beta_1|^2},
\label{eq:v5-norm-sum}
\end{align}
where we have used $\kappa \leq n$, used $\psi_{\kappa+1}=0$ in the final equality along with the initial orthogonal projection. Now, using item $(ii)$ of Claim~\ref{claim:norm_drop}, we have from Eq.~\eqref{eq:induction-hypothesis} that
\[
\xi(\sigma_{t+1})\le C_{k-1}\|\sigma_{t+1}\|.
\]
Using the decomposition of Eq.~\eqref{eq:v5-reconstruction}, the triangle inequality for $\xi$, and Eq.~\eqref{eq:v5-norm-sum}, we then obtain
\begin{align}
\xi(\psi) \le\xi(\sigma_1)+\sum_{t=1}^{\kappa}\xi(\sigma_{t+1}) \leq |\beta_1|+C_{k-1}\sum_{t=1}^{\kappa}\|\sigma_{t+1}\| 
&\le|\beta_1|+(1+\sqrt2)C_{k-1} \sqrt{\|\psi\|^2-|\beta_1|^2}\notag\\*
&\le\sqrt{1+(3+2\sqrt2)C_{k-1}^{\,2}}\,\|\psi\|,
\label{eq:v5-induction}
\end{align}
where in the last inequality, we used the Cauchy--Schwarz inequality. Note that it suffices to choose
\begin{equation}\label{eq:v5-recurrence}
C_1=1,\qquad
C_k=\sqrt{1+(3+2\sqrt2)C_{k-1}^{\,2}}\quad(k\ge2).
\end{equation}
Squaring this recurrence and iterating yields
\begin{equation}\label{eq:v5-constant}
C_k^2=\sum_{j=0}^{k-1}(3+2\sqrt2)^j,
\qquad
C_k=\sqrt{\frac{(3+2\sqrt2)^k-1}{2+2\sqrt2}}
\le(1+\sqrt2)^k=2^{O(k)}.
\end{equation}
This proves the lemma.
\end{proof}

\paragraph{Tomography result.}
The proof of Theorem~\ref{thm:tomo_rank} immediately follows from application of Theorem~\ref{thm:tomo_extent} and using Theorem~\ref{lem:ub_extent_rank} to comment on the upper bound on $\xi(\ket{\psi})$ for states $\ket{\psi}$ with stabilizer-rank $k$.

\section{Discussion}\label{sec:discussion}
In this note, we showed that states with stabilizer rank $O(\log(n))$ can be efficiently learned. However, the question remains if this can be extended to obtain efficient tomography algorithms for states with stabilizer rank $\poly(n)$. One way of achieving this result would be to obviously improve the upper bound on stabilizer extent for states $\ket{\psi}$ with stabilizer rank $k$ to $\xi(\ket{\psi}) \leq O(\poly(k))$. 
Of particular interest is also trying to obtain better lower bounds on the stabilizer rank of the magic state $\ket{T}^{\otimes n}$ where $\ket{T} = (\ket{0} + e^{i \pi/4} \ket{1})/\sqrt{2}$. Currently, the best known result is $\Omega(n^2/\polylog(n))$ due to~\cite{mehraban2024quadratic,lovitz2022new}. 

\begin{corollary}\label{cor:lower_bound_rank_Tstates}
Let $\alpha > \cos^2(\pi/8)$ be a constant. Every $n$-qubit state $\ket{\psi}$ satisfying $|\la \psi | T^{\otimes n} \ra|^2 \geq \alpha^n$ has stabilizer rank $\chi(\ket{\psi}) = \Omega(n)$.
\end{corollary}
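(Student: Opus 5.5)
We combine Lemma~\ref{lem:ub_extent} with the known fact that the stabilizer extent controls the stabilizer fidelity. Let
$$F(\ket{\psi}) := \max_{\ket{\phi}\in\Stab}|\la\phi|\psi\ra|^2$$
denote the stabilizer fidelity.

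\textbf{Step 1: extent versus fidelity.} For any normalized $\ket{\psi}$, take a decomposition $\ket{\psi}=\sum_i c_i\ket{\phi_i}$ attaining $\xi(\ket{\psi})$. Then
$$1=\la\psi|\psi\ra=\sum_i c_i\la\psi|\phi_i\ra\le \xi(\ket{\psi})\sqrt{F(\ket{\psi})},$$
so $\xi(\ket{\psi})\ge F(\ket{\psi})^{-1/2}$. This holds for the unsquared extent used in this note. Combined with Lemma~\ref{lem:ub_extent}, a state of stabilizer rank $k$ satisfies $F(\ket{\psi})\ge C_k^{-2}\ge(1+\sqrt2)^{-2k}$. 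Hence there is a stabilizer state $\ket{\phi}$ with
$$|\la\phi|\psi\ra|\ge (1+\sqrt2)^{-k}.$$

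\textbf{Step 2: transfer to $\ket{T}^{\otimes n}$.} The single-qubit stabilizer fidelity of $\ket{T}$ is $\cos^2(\pi/8)$. Stabilizer fidelity is multiplicative for $\ket{T}^{\otimes n}$; this is known from Bravyi et al.~\cite{bravyi2019simulation}, since $\ket{T}$ is among the states for which $\xi$ is multiplicative. So
$$F(\ket{T}^{\otimes n})=\cos^{2n}(\pi/8).$$
Now use the triangle inequality for angles (Fubini--Study metric). Let $\theta_1=\arccos|\la\psi|T^{\otimes n}\ra|$ and $\theta_2=\arccos|\la\phi|\psi\ra|$. Then
$$|\la\phi|T^{\otimes n}\ra|\ge\cos(\theta_1+\theta_2),$$
provided $\theta_1+\theta_2\le\pi/2$. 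Both overlaps are exponentially small, so they sit near angle $\pi/2$, and this crude inequality is useless.

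\textbf{Step 2, revised: avoid the angle argument.} The cleaner route is to go through the extent directly. Let $\ket{\psi}=\sum_i c_i\ket{\phi_i}$ be an optimal decomposition. Then
$$\alpha^{n/2}\le|\la T^{\otimes n}|\psi\ra|\le\sum_i|c_i||\la T^{\otimes n}|\phi_i\ra|\le \xi(\ket{\psi})\cos^{n}(\pi/8).$$
Therefore
$$\xi(\ket{\psi})\ge\left(\frac{\sqrt{\alpha}}{\cos(\pi/8)}\right)^{n}.$$
Since $\alpha>\cos^2(\pi/8)$, the ratio $\sqrt\alpha/\cos(\pi/8)$ is a constant $\gamma>1$. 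Lemma~\ref{lem:ub_extent} gives $\xi(\ket{\psi})\le(1+\sqrt2)^{k}$ with $k=\chi(\ket{\psi})$. Hence
$$k\ge n\,\frac{\log\gamma}{\log(1+\sqrt2)}=\Omega(n).$$
This argument does not use Step 1 at all.

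\textbf{Main obstacle.} The only nontrivial ingredient is the multiplicativity bound $\max_{\phi\in\Stab}|\la\phi|T^{\otimes n}\ra|^2=\cos^{2n}(\pi/8)$. I would cite it from~\cite{bravyi2019simulation}. That paper proves $F(\ket{T}^{\otimes n})=F(\ket{T})^n$ for single-qubit states, via multiplicativity of the extent for products of at most three-qubit states together with the duality $\xi=1/F$ for such states.
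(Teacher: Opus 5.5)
Your revised Step~2 is correct. It yields the explicit bound $\chi(\ket{\psi})\ge n\log\bigl(\sqrt{\alpha}/\cos(\pi/8)\bigr)/\log(1+\sqrt2)$.

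It differs from the paper's one-line proof in which fact about $\ket{T}^{\otimes n}$ gets combined with Lemma~\ref{lem:ub_extent}. The paper cites the extent value $\xi(\ket{T^{\otimes n}})=\cos^{-n}(\pi/8)$ of Eq.~\eqref{eq:T-extent}. You instead use the stabilizer-fidelity bound $\max_{\phi\in\Stab}|\la\phi|T^{\otimes n}\ra|\le\cos^{n}(\pi/8)$, together with the elementary estimate $|\la y|\psi\ra|\le\xi(\psi)\max_{\phi\in\Stab}|\la y|\phi\ra|$.

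Your input is the one that actually closes the argument. The extent of $\ket{T^{\otimes n}}$ does not by itself control its overlap with a different state $\psi$; via your Step~1 it only gives $F(\ket{T^{\otimes n}})\ge\cos^{2n}(\pi/8)$. So passing from Eq.~\eqref{eq:T-extent} to the corollary tacitly needs the equality $\xi=F^{-1/2}$ for $\ket{T^{\otimes n}}$, and that is exactly your fidelity bound. Your write-up makes this explicit and shows where the threshold $\alpha>\cos^2(\pi/8)$ comes from.

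One correction to your ``main obstacle'' paragraph. The cited fact is true and standard, but the mechanism you sketch does not deliver it. Extent multiplicativity plus $\xi\ge F^{-1/2}$ (tight on the single-qubit factor) gives only the trivial direction $F(\ket{T}^{\otimes n})\ge\cos^{2n}(\pi/8)$. Asserting $\xi=F^{-1/2}$ for $\ket{T}^{\otimes n}$ itself is equivalent to the claim. The needed upper bound is the multiplicativity of stabilizer fidelity, which is the genuinely nontrivial input from \cite{bravyi2019simulation}.

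Step~1 and the angle argument can simply be dropped.
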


The known extent formula and its multiplicativity for tensor products~\cite{bravyi2019simulation} give
\begin{equation}\label{eq:T-extent}
 \xi(\ket{T^{\otimes n}})= (\cos (\pi/8))^{-n}.
\end{equation}
The proof of Corollary~\ref{cor:lower_bound_rank_Tstates} then follows from Lemma~\ref{lem:ub_extent} and Eq.~\eqref{eq:T-extent}. This only however yield a linear rank lower bound. While it improves known lower bounds in this regime (see~\cite{kalra2025stabilizer} for more details), it does not improve the nearly-quadratic lower bounds for exact or constant-error approximations. What extent-rank relationship would then suffice?
From Eq.~\eqref{eq:T-extent}, we have that proving $\log\xi(\ket{\psi})\leq f(\chi(\ket{\psi}))$ would imply a lower bound of $f(\chi(\ket{\psi})) \geq \Omega(n)$. If for some fixed $\alpha>0$ one could prove
$\xi(\ket{\psi}) \leq \exp\!\bigl(O(\chi(\ket{\psi})^\alpha)\bigr)$ then
the resulting lower bound would be $\chi(\ket{T^{\otimes n}}) = \Omega(n^{1/\alpha})$. Taking $\alpha=1/2$ would give $\Omega(n^2)$, removing the logarithmic loss in the best known bounds. Any fixed $\alpha<1/2$ would give a super-quadratic lower bound. To obtain an exponential lower bound of $\xi(\ket{T^{\otimes n}}) = 2^{\Omega(n)}$, one route would be to then show that $\xi(\ket{\psi}) \leq \poly(\chi(\ket{\psi}))$ for every $n$-qubit state $\ket{\psi}$. Coincidentally, this is precisely the desired result for obtaining polynomial-time tomography protocols for states with stabilizer rank $\poly(n)$ mentioned earlier.

\paragraph{Acknowledgements.} 
AD thanks Srinivasan Arunachalam for multiple conversations on this topic and going through an initial draft of this note.

\paragraph{AI Disclosure.} 
The first proof of Theorem~\ref{lem:ub_extent_rank} was obtained using GPT 6 Astra. It was then realized by the author that it was closely related to the proof of convergence in energy increment arguments such as that in \cite{ad2026tomoextent}. The proof was subsequently rewritten and simplified by the author to reflect this. The author takes full responsibility for the correctness, exposition, and attribution of this work.


\bibliographystyle{alpha}
\bibliography{references}

\end{document}